\definecolor{light-gray}{gray}{0.9}
	\newcommand{\AB}{\ensuremath{{A}}}
	\newtheorem{proposition}{Proposition}%
\newcommand{\av}[0]{\ensuremath{\mathit{AV}}\xspace}	
\newcommand{\sav}[0]{\ensuremath{\mathit{SAV}}\xspace}
\newcommand{\csav}[0]{\ensuremath{\mathit{CSAV}}\xspace}
\newcommand{\msav}[0]{\ensuremath{\mathit{MSAV}}\xspace}
	\newlength{\wordlength}
\begin{document}

	\title{Algorithms for two variants of\\
	 Satisfaction Approval Voting}
	
	
		\author[nicta]{Haris Aziz} \ead{haris.aziz@nicta.com.au}
		\author[nicta]{Toby Walsh}  \ead{toby.walsh@nicta.com.au}
	 \address[nicta]{NICTA and UNSW, 223 Anzac Parade, Sydney, NSW 2033, Australia, \\Phone: +61 2 8306 0490}

\begin{abstract}
Multi-winner voting rules based on approval ballots have received increased attention in recent years. In particular Satisfaction Approval Voting (SAV) and its variants have been proposed. In this note, we show that the winning set can be determined in polynomial time for two prominent and natural variants of SAV.
We thank Arkadii Slinko 
for suggesting these problems in a talk at the Workshop on Challenges in Algorithmic Social Choice, Bad Belzig, October 11, 2014.
\end{abstract}

\maketitle

\section{Introduction}

Multi-winner voting rules based on approval ballots have received increased attention in recent years~\citep{ABC+14c,BKS07a,BMS06a,Kilg10a,MPRZ08a}. A natural way is \av in which the candidates with the highest number of approvals are selected. However \av is not very good at representing diverse intetests and other rules based on approval ballots have been devised.
In particular Satisfaction Approval Voting (\sav) and its variants have been proposed~\citep{BrKi14a,KiMa12a}. We show that the winning set can be computed in polynomial time for two prominent variants of \sav.

\section{Approval-based rules}

We consider the social choice setting  there is a
set of agents $N=\{1,\ldots, n\}$ and a set of candidates $C=\{c_1,\ldots, c_m\}$. Each
agent $i\in N$ expresses an approval
ballot $A_i\subset C$ that represents the subset of candidates that he
approves of, yielding a set of approval ballots $\AB = \{A_1,\ldots, A_n\}$.  \emph{Approval-based multi-winner rules} that take
as input $(C,\AB, k)$ and return the subset $W \subseteq C$ of size $k$ that is the winning set. 
Some rules may not return a winning set of a specific size and the size can vary from $1$ to $m$.
We consider the following rules. 

\paragraph{Satisfaction Approval Voting ($\sav$)}
\sav  finds a set $W \subseteq C$ of size $k$ that maximizes
$Sat(W)=\sum_{i\in N}\frac{|W\cap A_i|}{|A_i|}$.

\paragraph{Constrained Satisfaction Approval Voting ($\csav$)}
\csav  finds a set $W \subseteq C$  that maximizes
$CSat(W)=\sum_{i\in N}\frac{|W\cap A_i|}{|W|}$.

\paragraph{Modified Satisfaction Approval Voting ($\msav$)}
\msav  finds a set $W \subseteq C$ that maximizes
$MSat(W)=\sum_{i\in N}\frac{|W\cap A_i|}{\min(|A_i|,|W|)}$.\\

In contrast to \sav, \csav and \msav coincide with \av for the case of $k=1$. In this sense, they can be considered as proper generalization of \av for larger $k$.
$\csav$ and $\msav$ can also be used to select a winning set of a given size $k$ by selecting a set of size $k$ with the highest $CSat$ and $MSat$ scores respectvely.
Note that if the size of the winning set is not specified, then whereas $\sav$ gives the set of all candidates the highest score, $\csav$ and $\msav$ may not give the set of all candidates the highest score. We consider both the problem of computing the winning set of a particular size as well as the problem of computing the winning set of any size.

\section{Computational results}

\citet{AGG+14a} showed that \sav can be implemented in polynomial time. We show that \csav and \msav winning set can also be computed in polynomial time. 

\begin{proposition}
	There exists a polynomial-time algorithm to compute a set of size $k$ with the maximum CSat score.
	\end{proposition}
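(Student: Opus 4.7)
The key observation is that once we fix the size of the winning set to be $k$, the denominator $|W|$ in $CSat(W)$ is the constant $k$ and can be pulled out of the sum. So the plan is to reduce the problem to plain Approval Voting.

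Specifically, for any $W$ with $|W|=k$ we have
\begin{equation*}
CSat(W) \;=\; \frac{1}{k}\sum_{i\in N}|W\cap A_i| \;=\; \frac{1}{k}\sum_{i\in N}\sum_{c\in W}\mathbb{1}[c\in A_i] \;=\; \frac{1}{k}\sum_{c\in W} n_c,
\end{equation*}
where $n_c = |\{i\in N : c\in A_i\}|$ is the approval score of candidate $c$. Hence maximizing $CSat(W)$ subject to $|W|=k$ is equivalent to maximizing $\sum_{c\in W} n_c$ subject to $|W|=k$, which is solved by the standard \av rule.

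The algorithm is therefore immediate: compute $n_c$ for every candidate $c\in C$ in $O(nm)$ time by scanning the ballots, sort the candidates in non-increasing order of $n_c$, and output the top $k$ (breaking ties arbitrarily). Correctness follows from the display above, and the running time is clearly polynomial in $n$ and $m$. There is no real obstacle; the only subtlety worth mentioning is that the reduction relies on the size constraint being exactly $k$, which makes the denominator a constant — the reason \csav does not collapse to \av when the size of $W$ is left free.
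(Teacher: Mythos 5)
Your proposal is correct and follows essentially the same route as the paper: both exploit that $|W|=k$ makes the denominator constant, decompose $CSat(W)$ into per-candidate contributions, and select the $k$ candidates with the highest such scores. Your phrasing of the per-candidate score as $n_c/k$ (i.e., a reduction to \av) is just a cosmetic repackaging of the paper's ``CSat score of an individual candidate with respect to set size $k$''.
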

	\begin{proof}
		Note that for $|W|=k$,
		\[CSat(W)=\sum_{i\in N}\frac{|W\cap A_i|}{|W|}=\sum_{c\in W}\sum_{i\in N, c\in A_i}\frac{1}{k}\]
		where we say that $\sum_{i\in N, c\in A_i}\frac{1}{k}$ is the \emph{CSat score of an individual candidate with respect to set size $k$. }
Hence computing a winning set of size $k$ is equivalent to selecting $k$ candidates with the highest CSat scores with respect to set size $k$. Since the CSat score of a particular candidate can be computed in polynomial time, a set of size $k$ with the highest CSat score can be computed in polynomial time by selecting $k$ candidates with the highest CSat scores.\end{proof}

\begin{proposition}
	There exists a polynomial-time algorithm to compute a set of arbitrary size with the maximum CSat score.
	\end{proposition}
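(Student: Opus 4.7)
The plan is to reduce the arbitrary-size problem to the fixed-size problem handled by the previous proposition and then enumerate over all possible sizes. First I would observe that, by the previous proposition, for any fixed target size $k$ the optimum $CSat$ value over sets of size exactly $k$ is achieved by the set consisting of the $k$ candidates with the highest individual $CSat$ score with respect to $k$. Since the individual $CSat$ score with respect to $k$ is $|\{i \in N : c \in A_i\}|/k$, this optimum set is simply the $k$ candidates with the largest number of approvals (breaking ties arbitrarily), and the corresponding optimum value is
\[
f(k) = \frac{1}{k}\sum_{c \in W_k} |\{i \in N : c \in A_i\}|,
\]
where $W_k$ denotes the top-$k$ candidates under approval count.

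Next I would argue that the arbitrary-size optimum is $\max_{1 \le k \le m} f(k)$: any candidate set $W$ of arbitrary size has $|W| = k$ for some $k \in \{1, \ldots, m\}$, so its $CSat$ value is at most $f(k)$. Hence it suffices to compute $f(k)$ for every $k$ and output a maximizer $W_{k^*}$.

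The algorithm is then straightforward: sort the candidates once by their total approval count in non-increasing order (polynomial in $n$ and $m$), compute the running prefix sums, and use these prefix sums to evaluate $f(k)$ for every $k$ in constant time per value. Finally, return the prefix $W_{k^*}$ achieving the maximum $f(k)$. All steps run in time polynomial in $n$ and $m$.

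I do not foresee a real obstacle; the only mild subtlety is justifying that the overall maximum is attained at some $k \in \{1, \ldots, m\}$ and that enumerating sizes is enough, which follows immediately from the previous proposition applied to each fixed size.
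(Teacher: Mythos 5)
Your proposal is correct and follows essentially the same route as the paper: enumerate all possible sizes $k$ from $1$ to $m$, apply the fixed-size result for each $k$, and return the best set found. The extra detail about sorting by approval count once and using prefix sums is just an efficiency refinement of the same argument.
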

	\begin{proof}
		For $k=1$ to $|C|$, compute a $k$-set with the maximum CSat score. Identify the $k$ for which the $k$-set with the maximum CSat score has the maximum CSat score among all $k$s. Return such a $k$-set.
	\end{proof}

\begin{proposition}
	There exists a polynomial-time algorithm to compute a set of size $k$ with the maximum MSat score.
	\end{proposition}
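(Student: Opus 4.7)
The plan is to imitate the decomposition used for \csav in the previous proposition. The essential observation is that once $|W|=k$ is fixed, the denominator $\min(|A_i|,|W|)=\min(|A_i|,k)$ depends only on the ballot $A_i$ and on $k$, not on which specific set $W$ of size $k$ we pick. So the dependence on $W$ in the MSat objective lives entirely in the numerators $|W\cap A_i|$, and those split additively over the candidates in $W$.

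Concretely, I would rewrite
\[
MSat(W)=\sum_{i\in N}\frac{|W\cap A_i|}{\min(|A_i|,k)}=\sum_{c\in W}\sum_{i\in N\,:\,c\in A_i}\frac{1}{\min(|A_i|,k)},
\]
and define the \emph{MSat score of candidate $c$ with respect to set size $k$} to be $\sum_{i\in N\,:\,c\in A_i}\frac{1}{\min(|A_i|,k)}$. Each such per-candidate score is computable in polynomial time by scanning the ballots once, given that $k$ is known.

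Since $MSat(W)$ is then just the sum of these per-candidate scores over $c\in W$, maximizing $MSat(W)$ over sets of size exactly $k$ reduces to selecting the $k$ candidates with the largest per-candidate MSat scores with respect to $k$, with ties broken arbitrarily. The algorithm is therefore: (1) compute $\min(|A_i|,k)$ for each $i$; (2) compute the per-candidate MSat score for each $c\in C$; (3) sort and return the top $k$. The total running time is clearly polynomial in $n$ and $m$.

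I do not foresee a real obstacle here, because the only thing that made MSat look harder than CSat is the $\min(|A_i|,|W|)$ in the denominator, and this becomes a fixed per-ballot quantity the moment $|W|$ is pinned to $k$. The whole argument is structurally identical to the \csav proof; the only change is the replacement of the constant $k$ in the denominator by the ballot-dependent quantity $\min(|A_i|,k)$, which still does not couple the choices of the $k$ winners.
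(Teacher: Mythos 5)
Your proposal is correct and matches the paper's own argument essentially verbatim: both rewrite $MSat(W)$ for fixed $|W|=k$ as $\sum_{c\in W}\sum_{i\in N,\, c\in A_i}\frac{1}{\min(|A_i|,k)}$ and then select the $k$ candidates with the highest per-candidate scores. No gaps; nothing further to add.
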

	\begin{proof}
		Note that for $|W|=k$,
		\[MSat(W)=\sum_{i\in N}\frac{|W\cap A_i|}{\min(|A_i|,|W|)}=
		\sum_{i\in N}\frac{|W\cap A_i|}{\min(|A_i|,k)}=
		\sum_{c\in W}\sum_{i\in N, c\in A_i}\frac{1}{\min(|A_i|,k)}\]
		where we say that $\sum_{i\in N, c\in A_i}\frac{1}{\min(|A_i|,k)}$ is the \emph{MSat score of an individual candidate with respect to set size $k$.} 
Hence computing a winning set of size $k$ is equivalent to selecting $k$ candidates with the highest MSat scores with respect to set size $k$. Since the MSat score of a particular candidate can be computed in polynomial time, a set of size $k$ with the highest MSat score can be computed in polynomial time by selecting $k$ candidates with the highest MSat scores.\end{proof}

\begin{proposition}
	There exists a polynomial-time algorithm to compute a set of arbitrary size with the maximum MSat score.
	\end{proposition}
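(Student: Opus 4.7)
The plan is to reduce the arbitrary-size problem to the fixed-size problem already solved in Proposition~3, exactly as Proposition~2 did for \csav. Since the size of the winning set lies in $\{1, 2, \ldots, |C|\}$, I would simply try each candidate size in turn.

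First, I would loop over $k = 1, 2, \ldots, |C|$, and for each such $k$ invoke the polynomial-time procedure guaranteed by Proposition~3 to obtain a set $W_k$ of size $k$ maximizing the MSat score among all $k$-subsets. Note that the MSat score of $W_k$ can be read off during this computation (e.g.\ by summing the $k$ highest per-candidate MSat scores with respect to set size $k$).

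Next, I would compare the values $MSat(W_1), MSat(W_2), \ldots, MSat(W_{|C|})$ and return any $W_{k^*}$ achieving the maximum. Since any winning set of arbitrary size has some specific cardinality $k^* \in \{1, \ldots, |C|\}$, the set $W_{k^*}$ produced for that $k^*$ matches (in MSat value) the globally optimum solution, so the overall maximum over $k$ is attained.

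The main \emph{obstacle} is essentially nonexistent: correctness follows from the observation above, and the running time is at most $|C|$ times the running time of the algorithm in Proposition~3, hence polynomial. The only subtlety worth a sentence in the writeup is that the per-candidate MSat scores must be recomputed for each $k$ because the denominators $\min(|A_i|, k)$ depend on $k$; but this recomputation is itself polynomial, so it does not affect the overall bound.
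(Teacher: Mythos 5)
Your proposal is correct and follows exactly the paper's argument: iterate $k$ from $1$ to $|C|$, use the fixed-size algorithm (Proposition~3) to find an optimal $k$-set, and return the best over all $k$, which is clearly polynomial since it makes at most $|C|$ calls to a polynomial-time subroutine. Your extra remark about recomputing per-candidate scores for each $k$ (since the denominators $\min(|A_i|,k)$ depend on $k$) is a harmless clarification that does not change the approach.
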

	\begin{proof}
		For $k=1$ to $|C|$, compute a $k$-set with the maximum MSat score. Identify the $k$ for which the $k$-set with the maximum MSat score has the maximum MSat score among all $k$s. Return such a $k$-set.
	\end{proof}

	We see that just like the \sav, two prominent variants of \sav can computed in polynomial time.

%

%


\begin{thebibliography}{8}
\expandafter\ifx\csname natexlab\endcsname\relax\def\natexlab#1{#1}\fi
\expandafter\ifx\csname url\endcsname\relax
  \def\url#1{\texttt{#1}}\fi
\expandafter\ifx\csname urlprefix\endcsname\relax\def\urlprefix{URL }\fi

\bibitem[{Aziz et~al.(2015)Aziz, Brill, Conitzer, Elkind, Freeman, and
  Walsh}]{ABC+14c}
Aziz, H., Brill, M., Conitzer, V., Elkind, E., Freeman, R., Walsh, T., 2015.
  Justified representation in approval-based committee voting. In: Proceedings
  of the 29th AAAI Conference on Artificial Intelligence (AAAI).

\bibitem[{Aziz et~al.(2014)Aziz, Gaspers, Gudmundsson, Mackenzie, Mattei, and
  Walsh}]{AGG+14a}
Aziz, H., Gaspers, S., Gudmundsson, J., Mackenzie, S., Mattei, N., Walsh, T.,
  2014. Computational aspects of multi-winner approval voting. In: Proceedings
  of the 8th Multidisciplinary Workshop on Advances in Preference Handling.

\bibitem[{Brams and Kilgour(2014)}]{BrKi14a}
Brams, S.~J., Kilgour, D.~M., 2014. Satisfaction approval voting. In: Voting
  Power and Procedures. Studies in Choice and Welfare. Springer, pp. 323--346.

\bibitem[{Brams et~al.(2006)Brams, Kilgour, and Sanver}]{BMS06a}
Brams, S.~J., Kilgour, D.~M., Sanver, M.~R., 2006. Mathematics and democracy:
  recent advances in voting systems and collective choice. In: How to elect a
  representative committee using approval balloting. Springer, pp. 83--96.

\bibitem[{Brams et~al.(2007)Brams, Kilgour, and Sanver}]{BKS07a}
Brams, S.~J., Kilgour, D.~M., Sanver, M.~R., 2007. A minimax procedure for
  electing committees. Public Choice 132, 401--420.

\bibitem[{Kilgour(2010)}]{Kilg10a}
Kilgour, D.~M., 2010. Approval balloting for multi-winner elections. In:
  Handbook on Approval Voting. Springer, Ch.~6.

\bibitem[{Kilgour and Marshall(2012)}]{KiMa12a}
Kilgour, D.~M., Marshall, E., 2012. Approval balloting for fixed-size
  committees. In: Electoral Systems. Studies in Choice and Welfare. Springer,
  Ch.~12, pp. 305--326.

\bibitem[{Meir et~al.(2008)Meir, Procaccia, Rosenschein, and Zohar}]{MPRZ08a}
Meir, R., Procaccia, A.~D., Rosenschein, J.~S., Zohar, A., 2008. Complexity of
  strategic behavior in multi-winner elections. Journal of Artificial
  Intelligence Research 33, 149--178.

\end{thebibliography}
\end{document}